\newtheorem{theo}{Theorem}
\newtheorem{exmp}{Example}
\newtheorem{step}{Step}[exmp]
\newtheorem{iter}{Iteration}
\newtheorem{resu}{Result}
\begin{document}


\title{Parity Check Matrix Recognition from Noisy Codewords}


\author{\IEEEauthorblockN{Yasser Karimian, Saeideh Ziapour and Mahmoud Ahmadian Attari }
\IEEEauthorblockA{ \\Coding and Cryptography Lab. Faculty of Electrical and Computer Engineering\\ K. N. Toosi University of Technology\\
Tehran, Iran\\
yasser.karimian@ee.kntu.ac.ir, mahmoud@eetd.kntu.ac.ir}}

\begin{spacing}{2}

\maketitle
\begin{abstract}
We study recovering parity check relations for an unknown code from intercepted bitstream received from Binary Symmetric Channel in this paper. An iterative column elimination algorithm is introduced which attempts to eliminate parity bits in codewords of noisy data. This algorithm is very practical due to low complexity and use of XOR operator. Since, the computational complexity is low, searching for the length of code and synchronization is possible. Furthermore, the Hamming weight of the parity check words are only used in threshold computation and unlike other algorithms, they have negligible effect in the proposed algorithm. Eventually, experimental results are presented and estimations for the maximum noise level allowed for recovering the words of the parity check matrix are investigated.
\end{abstract}

\begin{IEEEkeywords}
linear block code, parity check matrix, iterative column elimination, parity elimination.
\end{IEEEkeywords}

\newpage

\section{Introduction}
\label{Sec: Intro}
Error correction codes are used in telecommunication systems to deal with noise and to increase data transmission ability. The message is encoded at the transmitter by channel encoder and decoded at the receiver knowing the parameters of code such as generator and parity check matrices. Specification recovery for communication systems from received signal, without any knowledge about the transmitter system is very complicated. The solution in \cite{Valembois2001199} is to find the nearest (in sense of Hamming distance) $(n,k)$-code from the code used for channel coding. The associated decision problem is proved to be NP-complete. \cite{Cluzeau:isit06} introduces an iterative decoding based algorithm using Gallager decoder which seeks for low-weight words. \cite{Cluzeau:isit06} and \cite{Cluzeau:isit08} address the problem as reverse-engineering a communication systems.

In \cite{Barbier:amcs06} and \cite{Barbier:phd07} an algorithm based on rank computation has been introduced to find the correct length and synchronization, but a simple method is needed to find the ranks of sub-matrices. Two algorithms were proposed in \cite{Cluzeau:isit09} to search for words of parity check matrix which can obtain code length and synchronization. If selected code length and synchronization are chosen correctly, looking for dual codewords will be sufficient for obtaining information of code length, synchronization, and words in the dual code \cite{Cluzeau:isit09}. To decrease search space, Canteaut-Chabaud information set decoding algorithm \cite{canteant:transinfo98} has been used. However, we need a proper choice on the weight of parity check matrix words to recover them.

In this paper we present a very low complexity algorithm to find the words of a parity check matrix without any assumption about the weight of the words. The main idea of this work is to use the parity check equations which lead to zero syndrome bits. In fact, to recover any word of the parity check matrix, $h$, we only need $k$ independent codewords ${\upsilon _j},\;1 \le j \le k$, such that they satisfy the parity equation ${\upsilon _j}{h^T} = 0$.

The paper is organized as follows: In Section II column elimination operation and linear block codes are discussed. Parity elimination on linear block codes is presented in section III. In the next section iterative elimination algorithm is introduced for recovering dual codewords from noisy data. The required equations for threshold computation are obtained in section V. Experimental results are given in section VI, and finally section VII concludes the paper.

\section{Preliminaries}
\subsection{Column Elimination Operation}
Consider matrix $\Lambda $,
\begin{equation}
 {\Lambda }=\left(\begin{IEEEeqnarraybox}[
\IEEEeqnarraystrutmode
][c]{c,c,c}
a_{11} &\cdots & a_{1n}\\
\vdots & \ddots  &  \vdots \\
a_{M1} & \cdots   &  a_{Mn}
\end{IEEEeqnarraybox}\right)
\end{equation}
where ${a_{ij}} = \left\{ {0,1} \right\}$ for $1 \le i \le n,1 \le j \le M$. Linear independent columns of matrix $\Lambda$
form a basis for vector space $\textbf{\emph{S}}$, and every column of $\Lambda$ is an element of $\textbf{\emph{S}}$. Elementary column
operation can be used to obtain column echelon form in order to find basis set, and it is
performed on $\Lambda$ as follows:

\textit{Step 1}: If ${a_{11}} \ne 0$ first colmn of $\Lambda$ is assumed as a basis vector of vector space $\textbf{\emph{S}}$ and is eliminated from
other dependent vectors by multiplication of transition matrix ${\Gamma} ^{(1)}$. Therefore, ${{\Lambda} ^{(1)}}$, the changed
matrix in the first step, is
\begin{equation}
{\rm{\Lambda }}^{(1)} = {\rm{\Lambda }} \times {\Gamma ^{(1)}} = \left( {\begin{array}{*{20}{c}}
{{a_{11}}}& \cdots &{{a_{1n}}}\\
 \vdots & \ddots & \vdots \\
{{a_{M1}}}& \cdots &{{a_{Mn}}}
\end{array}} \right)
 \left(\begin{IEEEeqnarraybox}[
\IEEEeqnarraystrutmode
\IEEEeqnarraystrutmode
][c]{c/v/c,c,c}
a_{11} && a_{12}  & \cdots & a_{12}
\\\hline
0 &&  &  &   \\
\vdots &&   &  I_{n} &   \\
0 &&   &  &
\end{IEEEeqnarraybox}\right)=
\left( {\begin{array}{*{20}{c}}
1&0& \cdots &0\\
{a_{21}^{(1)}}&{a_{22}^{(1)}}& \cdots &{a_{2n}^{(1)}}\\
 \vdots & \vdots & \ddots & \vdots \\
{a_{M1}^{(1)}}&{a_{M2}^{(1)}}& \cdots &{a_{Mn}^{(1)}}
\end{array}} \right)
\end{equation}
If $a_{11}=0$, the first row should be replaced by a row of non-zero first element, then the primary column
operation can be performed.

\textit{Step 2}: If $a_{22}=0$, multiplication of transition matrix, $\Gamma ^{(2)}$, results in

\begin{equation}
{\rm{\Lambda }}^{(2)} = {\rm{\Lambda }}^{(1)} \times {\Gamma ^{(2)}} = \left( {\begin{array}{*{20}{c}}
1&0& \cdots &0\\
{a_{21}^{(1)}}&{a_{22}^{(1)}}& \cdots &{a_{2n}^{(1)}}\\
 \vdots & \vdots & \ddots & \vdots \\
{a_{M1}^{(1)}}&{a_{M2}^{(1)}}& \cdots &{a_{Mn}^{(1)}}
\end{array}} \right)
 \left(\begin{IEEEeqnarraybox}[
\IEEEeqnarraystrutmode
][c]{c,c/v/c,c,c}
1 & 0 &&  & \cdots & 0  \\
0 & a_{22}^{(1)}  &&  & \cdots & a_{2n}^{(1)}
\\\hline
0 & 0 &&  & &  \\
\vdots & \vdots  &&  &I_{n} &   \\
0 & 0  &&  & &
\end{IEEEeqnarraybox}\right)=
\left( {\begin{array}{*{20}{c}}
{\begin{array}{*{20}{c}}
1&0&0\\
{a_{21}^{(2)}}&1&0\\
{a_{31}^{(2)}}&{a_{32}^{(2)}}&{a_{33}^{(2)}}
\end{array}\begin{array}{*{20}{c}}
 \cdots &0\\
 \cdots &0\\
 \cdots &{a_{3n}^{(2)}}
\end{array}}\\
{\begin{array}{*{20}{c}}
 \vdots & \vdots & \vdots \\
{a_{M1}^{(2)}}&{a_{M2}^{(2)}}&{a_{M3}^{(2)}}
\end{array}\begin{array}{*{20}{c}}
 \ddots & \vdots \\
 \cdots &{a_{Mn}^{(2)}}
\end{array}}
\end{array}} \right)
\end{equation}
Similarly, if $a_{22}=0$, the second row should be replaced by a row of non-zero second element, then the primary column operation can be performed.

\textit{Step j}: If $a_{jj} \ne 0$, multiplication of transition matrix, $\Gamma ^{(j)}$, results in
\begin{align}
{\rm{\Lambda }}^{(j)} = {\rm{\Lambda }}^{(j-1)} \times {\Gamma ^{(j)}} &=
\left(\begin{IEEEeqnarraybox}[
\IEEEeqnarraystrutmode
][c]{c,c,c,c,c,c}
1&0&0&0& \cdots &0\\
 \vdots & \ddots &{}&{}&{}& \vdots \\
{a_{\left( {j - 1} \right)1}^{\left( {j - 1} \right)}}& \cdots &1&0& \cdots &0\\
{a_{j1}^{(j - 1)}}& \cdots &{a_{j(j - 1)}^{(j - 1)}}&{a_{jj}^{(j - 1)}}& \cdots &{a_{jn}^{(j - 1)}}\\
 \vdots & \ddots & \vdots & \vdots & \ddots & \ddots \\
{a_{M1}^{(j - 1)}}& \cdots &{a_{Mj}^{(j - 1)}}&{a_{Mj}^{(j - 1)}}& \cdots &{a_{Mn}^{(j - 1)}}
\end{IEEEeqnarraybox}\right)
 \left(\begin{IEEEeqnarraybox}[
\IEEEeqnarraystrutmode
][c]{c,c,c/v/c,c,c}
  &         &  && 0 & \cdots & 0  \\
  & I_{j-1} &  &&   &        & \vdots  \\
  &         &  && 0 & \cdots & 0
  \\\hline
0 & \cdots  &0 && a_{jj}^{\left( {j - 1} \right)}  &\cdots& a_{jn}^{\left( {j - 1} \right)}
\\\hline
 0&  \cdots &0 &&   &       &     \\
\vdots & \ddots  & \vdots &&   & I_{n-j} &   \\
0 & \cdots  &0 &&   &     &
\end{IEEEeqnarraybox}\right)\nonumber\\
&=\left(\begin{IEEEeqnarraybox}[
\IEEEeqnarraystrutmode
][c]{c,c,c,c,c,c}
1&0&0&0& \cdots &0\\
 \vdots & \ddots &{}&{}&{}& \vdots \\
{a_{j1}^{(j)}}& \cdots &1&0& \cdots &0\\
{a_{(j + 1)1}^{(j)}}& \cdots &{a_{(j + 1)j}^{(j)}}&{a_{\left( {j + 1} \right)(j + 1)}^{(j)}}& \cdots &{a_{(j + 1)n}^{(j)}}\\
 \vdots & \ddots & \vdots & \vdots & \ddots & \ddots \\
{a_{M1}^{(j)}}& \cdots &{a_{Mj}^{(j)}}&{a_{M(j + 1)}^{(j)}}& \cdots &{a_{Mn}^{(j)}}
\end{IEEEeqnarraybox}\right)
\label{eqn: jth step}
\end{align}
We call this process column elimination operation and continue performing until we achieve echelon form. Apparently, the number of steps equals rank of matrix $\Lambda$.
If any zero-column (column of all zeros) appears in any step, we shift this column to the right-hand side of matrix and apply the change to the transition matrix.

\subsection{Linear Block Codes}
Suppose $C$ denotes a binary linear block code $(n,k,d_{min})$. Then, any codeword $c \in C$ is a linear combination of the generator matrix rows of $C$. In fact, any codeword can be represented as a linear combination of the elements of a $k-$dimensional basis space.

The systematic generator matrix $G_{sys}$ of a linear block code $(n,k,d_{min})$ can be represented as two
sub-matrices: The $k \times k$ identity matrix, $I_{k}$, and a $k \times (n-k)$ parity sub-matrix $P$ such that
\begin{align}
{G_{sys}} = \left( {{I_k}\left| P \right.} \right),\\
{H_{sys}} =  - \left( {{P^T}\left| {{I_{n - k}}} \right.} \right).
\end{align}
$H_{sys}$, the systematic form of $H$, is an $(n-k)-$dimensional dual code of $C$; in other words, $H$ is a full rank matrix \cite{Morelos:02}.
The general form of generator matrix can be represented as two sub-matrices; The $k\times k$ matrix
of basis vectors, $B$, and a $k\times (n-k)$ parity sub-matrix $P$ such that
\begin{align}
G = \left( {B\left| P \right.} \right),\\
G{H^T} = \bar 0.
\end{align}
The basis sub-matrix $B$ can be represented as $B=\left( {{{B_1} \cdots {B_k}}} \right)$ where
$B_{j}$, $1 \le j \le k$, is the $j$th column of $B$, and the parity sub-matrix $P$ can
be shown as $P=\left( {{{P_1} \cdots {P_{n-k}}}} \right)$ , $1 \le j \le n-k$, is the $j$th column of $P$. Note that
\begin{align}
P_{j}=\left( {{{P_{1j}} \cdots {P_{kj}}}} \right)^T.
\end{align}
It is clear that $P_j$s can be rewritten as linear combinations of $B_j$s.
In fact, this relation between $P_j$ and $B_j$s is the same as the relation defined by parity check
equation of the code, i.e.
\begin{align}
P_j = \sum \limits_{m = 1}^k {p_{m,j}}{B_m},1 \le j \le n - k
\end{align}
\section{Parity Elimination on Linear Block Codes}
Since rank of $G$ is undoubtedly $k$, performing column elimination operation on $G$, will result in $n-k$ all-zero
columns in $G^{(k)}$. As mentioned earlier, column elimination operation is done in a manner that
this $n-k$ all-zero columns appear in the right-hand side of $G^{(k)}$. Therefore $G$ can be presented as
\begin{align}
{G^{(k)}} = \left( {{B_{k \times k}}\left| {\bar 0} \right.} \right),
\end{align}
where $B_{k \times k}$ is a basis set for $k-$dimensional vector space and $\bar 0$ is a $k\times(n-k)$ all-zero matrix.

\begin{theo} If column elimination operation is performed on the generator matrix, $G$, of a binary linear block code $C$, a basis set for  $(n-k)-$dimensional vector space will appear in the columns of transition matrix, $\Pi  = {\Gamma ^{(1)}} \times  \cdots  \times {\Gamma ^{(k)}}$, correspondent to the all-zero columns of $G^{(k)}$.
\end{theo}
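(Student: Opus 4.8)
The plan is to read off the required dual basis directly from the defining identity $G^{(k)} = G\,\Pi$, and to control the two things that matter: which columns of $\Pi$ land in the dual code, and why those columns are independent. First I would denote by $\pi_1,\dots,\pi_n$ the columns of $\Pi$ and note that the $i$-th column of $G^{(k)}=G\Pi$ is exactly $G\pi_i$. By the form $G^{(k)}=\left(B_{k\times k}\left|\,\bar 0\right.\right)$ established just before the theorem, the last $n-k$ columns of $G^{(k)}$ vanish, so $G\pi_i=\bar 0$ for each of the $n-k$ indices $i$ that mark an all-zero column.

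Next I would interpret the relation $G\pi_i=\bar 0$. Since the codewords of $C$ are the linear combinations of the rows of $G$, the equation $G\pi_i=\bar 0$ says precisely that $\pi_i$ is orthogonal to every row of $G$, hence to every codeword; that is, $\pi_i$ is a word of the dual code $C^{\perp}$. The dimension of $C^{\perp}$ is $n-k$: this is the content of $G H^{T}=\bar 0$ with $H$ of full rank $n-k$, whose rows span $C^{\perp}$. So each of the $n-k$ singled-out columns of $\Pi$ is a dual word, i.e.\ a word of the parity check matrix.

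It then remains to prove linear independence, and here the argument is short: each transition matrix $\Gamma^{(j)}$ performs an elementary column operation and is therefore invertible, so $\Pi=\Gamma^{(1)}\times\cdots\times\Gamma^{(k)}$ is invertible as well. The columns of an invertible matrix are linearly independent, and any subcollection of them is independent too; in particular the $n-k$ columns indexed by the zero columns of $G^{(k)}$ are independent. Having produced $n-k$ independent vectors inside the $(n-k)$-dimensional space $C^{\perp}$, I would conclude that they form a basis of $C^{\perp}$, which is the claim.

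The step I expect to require the most care is the pivoting clause ``if $a_{jj}=0$ the $j$-th row is replaced'', since a row replacement is a left multiplication rather than a column operation and is not absorbed into $\Pi$. Strictly one then has $G^{(k)} = R\,G\,\Pi$ for an invertible matrix $R$ collecting these row manipulations; but because $R$ is invertible, $R\,G\,\pi_i=\bar 0$ is equivalent to $G\,\pi_i=\bar 0$, so the membership argument survives untouched. I would also verify that precisely $n-k$ all-zero columns appear, which follows from $\operatorname{rank} G = k$ noted at the start of this section, so that the dimension count matches exactly and no singled-out column is left unaccounted for.
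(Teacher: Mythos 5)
Your proposal is correct, and its skeleton is the paper's: both read dual-code membership off the identity $G^{(k)} = G\Pi = (B_{k\times k}\,|\,\bar 0)$ --- each column $\pi_i$ of $\Pi$ sitting above an all-zero column satisfies $G\pi_i = \bar 0$, hence lies in the dual code --- and both finish with the dimension count $\dim C^{\perp} = n-k$. The genuine divergence is in the independence step. The paper argues structurally: every $\Gamma^{(j)}$ with $j \le k$ acts as the identity on the last $n-k$ coordinates (the $I_{n-j}$ block in \eqref{eqn: jth step}), so the product $\Pi$ carries an $I_{n-k}$ block in its lower-right corner, which directly exhibits the last $n-k$ columns as linearly independent. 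You instead note that each $\Gamma^{(j)}$ is an invertible elementary matrix, so $\Pi$ is invertible and any subset of its columns is independent. Your version is shorter and more robust: it does not depend on the precise shape of the $\Gamma^{(j)}$, it absorbs the zero-column shifts (column permutations) without extra work, and it lets you patch the pivoting subtlety cleanly by factoring row swaps out as a left multiplication $G^{(k)} = R\,G\,\Pi$ with $R$ invertible --- a point the paper's proof passes over in silence. What the paper's structural argument buys in return is explicitness: knowing that the lower-right block of $\Pi$ is exactly $I_{n-k}$ tells you where and in what form the dual basis appears, which is precisely what the recovery algorithm later exploits when it reads dual words from the right-hand columns of the transition matrices $\Pi^{(i)}$.
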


\begin{proof} The procedure of column elimination operation is
\begin{align}
{G^{(k)}} = G \times {\Gamma ^{(1)}} \times  \cdots  \times {\Gamma ^{\left( k \right)}} = G \times \Pi  = \left( {{B_{k \times k}}\left| {{{\bar 0}_{k \times n - k}}} \right.} \right),
\label{eqn: Gtheorem}
\end{align}
On the other hand, any $n-k$ linear independent vectors, ${H^T} \ne \bar 0$, that satisfy $G{H^T} = \bar 0$, can form a basis set for dual code space. As shown in equation \eqref{eqn: jth step}, an identity matrix, $I_{n-k}$, appears in $G^{(k)}$, which means that $n-k$ vectors in the right-hand side of matrix $\Pi $ are linear independent. According to eq. \eqref{eqn: Gtheorem}, $n-k$ right-hand side columns of $\Pi $ multiplied by $G$ result in an all-zero sub-matrix in $G^{(k)}$, so they generate a basis for dual code space.
\end{proof}

\begin{exmp}
Let $G$ and $H$, the generator and parity check matrices of code $C$, Hamming code $(7,4)$, be,
\begin{align}
G = \left( {\begin{array}{*{20}{c}}
1&0&0&0&1&1&0\\
0&1&0&0&1&0&1\\
0&0&1&0&0&1&1\\
0&0&0&1&1&1&1
\end{array}} \right),
H = \left( {\begin{array}{*{20}{c}}
1&1&0&1&1&0&0\\
1&0&1&1&0&1&0\\
0&1&1&1&0&0&1
\end{array}} \right)
\end{align}
We perform column elimination operation for matrix $\Lambda$, its rows being codewords of $C$,
\begin{align}
{\rm{\Lambda }} = \left( {\begin{array}{*{20}{c}}
1&1&0&0&0&1&1\\
0&1&0&0&1&0&1\\
1&0&1&0&1&0&1\\
0&1&1&1&0&0&1
\end{array}} \right)
\end{align}
\begin{step}
\begin{equation}
{{\rm{\Lambda }}^{(1)}} = {\rm{\Lambda }} \times {\Gamma ^{(1)}} = \left( {\begin{array}{*{20}{c}}
1&0&0&0&0&0&0\\
0&1&0&0&1&0&1\\
1&1&1&0&1&1&0\\
0&1&1&1&0&0&1
\end{array}} \right),\Gamma ^{(1)}=
\left(\begin{IEEEeqnarraybox}[
\IEEEeqnarraystrutmode
][c]{c/v/c,c,c,c,c,c}
1&&1&0&0&0&1&1
\\\hline
0&&{}&{}&{}&{}&{}&{}\\
 \vdots &&{}&{}&{}&{{I_6}}&{}&{}\\
0&&{}&{}&{}&{}&{}&{}
\end{IEEEeqnarraybox}\right)
\end{equation}
\end{step}
\begin{step}
\begin{equation}
{{\rm{\Lambda }}^{(2)}} = {\Lambda }^{(1)} \times {\Gamma ^{(2)}} = \left( {\begin{array}{*{20}{c}}
1&0&0&0&0&0&0\\
0&1&0&0&0&0&0\\
1&1&1&0&0&1&1\\
0&1&1&1&1&0&0
\end{array}} \right),\Gamma ^{(2)}=
\left(\begin{IEEEeqnarraybox}[
\IEEEeqnarraystrutmode
][c]{c,c/v/c,c,c,c,c}
1&0&&0&0&0&0&0\\
0&1&&0&0&1&0&1
\\\hline
 0 &0&&{}&{}& &{}&{}\\
 \vdots & \vdots&&{}&{}&{{I_5}}&{}&{}\\
0&0&&{}&{}&{}&{}&{}
\end{IEEEeqnarraybox}\right)
\end{equation}
\end{step}
\begin{step}
\begin{equation}
{{\rm{\Lambda }}^{(3)}} = {\Lambda }^{(2)} \times {\Gamma ^{(3)}} = \left( {\begin{array}{*{20}{c}}
1&0&0&0&0&0&0\\
0&1&0&0&0&0&0\\
1&1&1&0&0&0&0\\
0&1&1&1&1&1&1
\end{array}} \right),\Gamma ^{(3)}=
\left(\begin{IEEEeqnarraybox}[
\IEEEeqnarraystrutmode
][c]{c,c,c/v/c,c,c,c}
 & & && &0 & 0 &0 \\
 & I_3 & &&  &0 &0 &0 \\
 & & && & 0& 1& 1\\\hline
 0 &0&0&&{}& &{}&{}\\
 \vdots & \vdots & \vdots &&{}& & I_4&{}\\
0&0&0&&{}&{}&{}&{}
\end{IEEEeqnarraybox}\right)
\end{equation}
\end{step}
\begin{step}
\begin{equation}
{{\rm{\Lambda }}^{(4)}} = {\Lambda }^{(3)} \times {\Gamma ^{(4)}} = \left( {\begin{array}{*{20}{c}}
1&0&0&0&0&0&0\\
0&1&0&0&0&0&0\\
1&1&1&0&0&0&0\\
0&1&1&1&0&0&0
\end{array}} \right),\Gamma ^{(4)}=
\left(\begin{IEEEeqnarraybox}[
\IEEEeqnarraystrutmode
][c]{c,c,c,c/v/c,c,c}
 & & & &&0 & 0 &0 \\
 & I_4 & &  &&0 &0 &0 \\
 & & & && 0& 0& 0\\
 & & & && 1& 1& 1\\\hline
 0 &0&0&0&& &{}&{}\\
 0 & 0 & 0 &0&& & I_3&{}\\
0&0&0&0&&{}&{}&{}
\end{IEEEeqnarraybox}\right)
\end{equation}
\end{step}
The transition matrix of the operation is as follows
\begin{equation}
\Pi  = {\Gamma ^{(1)}} \times {\Gamma ^{(2)}} \times {\Gamma ^{(3)}} \times {\Gamma ^{(4)}} =
\left(\begin{IEEEeqnarraybox}[
\IEEEeqnarraystrutmode
][c]{c,c,c,c/v/c,c,c}
1 & 1&0 & 0&&1 & 1 &0 \\
0 & 1 & 0&  0&&1 &0 &1 \\
 0& 0& 1& 0&& 0& 1& 1\\
 0&0 & 0& 1&& 1& 1& 1\\
 0 &0&0&0&& &{}&{}\\
 0 & 0 & 0 &0&& & I_3&{}\\
0&0&0&0&&{}&{}&{}
\end{IEEEeqnarraybox}\right)
\label{eqn: piexam1}
\end{equation}
Note that, as shown in equation \eqref{eqn: piexam1}, three right-hand columns of transition matrix, $\Pi$, form a basis set for dual code $3-$dimensional space.
\end{exmp}
\begin{exmp}
Consider code $C$ of example 1, and noisy matrix $\Lambda$ below,
\begin{align}
{\rm{\Lambda }} = \left( {\begin{array}{*{20}{c}}
1&0&0&1&\bf{1}&0&1\\
0&1&0&0&0&\bf{1}&0\\
\bf{1}&1&1&0&1&1&0\\
0&0&0&1&1&1&1
\end{array}} \right)
\end{align}
where noisy elements are denoted by boldface bits. Noisy elements are chosen arbitrarily
so as to satisfy ${\rm{\Lambda }} \times {h^T} = \bar 0$, $h = \left( 0\: 1\:1\:1\:0\:0\:1 \right)$.
We perform column elimination operation as follows:
\begin{step}
\begin{equation}
{{\rm{\Lambda }}^{(1)}} = {\rm{\Lambda }} \times {\Gamma ^{(1)}} = \left( {\begin{array}{*{20}{c}}
1&0&0&0&\bf{0}&0&0\\
0&1&0&1&0&\bf{1}&0\\
\bf{1}&1&1&\bf{1}&\bf{0}&1&\bf{1}\\
0&0&0&1&1&1&1
\end{array}} \right),\Gamma ^{(1)}=
\left(\begin{IEEEeqnarraybox}[
\IEEEeqnarraystrutmode
][c]{c/v/c,c,c,c,c,c}
1&&0&0&1&1&0&1
\\\hline
0&&{}&{}&{}&{}&{}&{}\\
 \vdots &&{}&{}&{}&{{I_6}}&{}&{}\\
0&&{}&{}&{}&{}&{}&{}
\end{IEEEeqnarraybox}\right)
\end{equation}
\end{step}
\begin{step}
\begin{equation}
{{\rm{\Lambda }}^{(2)}} = {\Lambda }^{(1)} \times {\Gamma ^{(2)}} = \left( {\begin{array}{*{20}{c}}
1&0&0&0&\bf{0}&0&0\\
0&1&0&0&0&\bf{0}&0\\
\bf{1}&1&1&\bf{0}&\bf{0}&\bf{0}&\bf{1}\\
0&0&0&1&1&1&1
\end{array}} \right),\Gamma ^{(2)}=
\left(\begin{IEEEeqnarraybox}[
\IEEEeqnarraystrutmode
][c]{c,c/v/c,c,c,c,c}
1&0&&0&0&0&0&0\\
0&1&&0&1&0&1&0
\\\hline
 0 &0&&{}&{}& &{}&{}\\
 \vdots & \vdots&&{}&{}&{{I_5}}&{}&{}\\
0&0&&{}&{}&{}&{}&{}
\end{IEEEeqnarraybox}\right)
\end{equation}
\end{step}
\begin{step}
\begin{equation}
{{\rm{\Lambda }}^{(3)}} = {\Lambda }^{(2)} \times {\Gamma ^{(3)}} = \left( {\begin{array}{*{20}{c}}
1&0&0&0&\bf{0}&0&0\\
0&1&0&0&0&\bf{0}&0\\
\bf{1}&1&1&\bf{0}&\bf{0}&\bf{0}&0\\
0&0&0&1&1&1&1
\end{array}} \right),\Gamma ^{(3)}=
\left(\begin{IEEEeqnarraybox}[
\IEEEeqnarraystrutmode
][c]{c,c,c/v/c,c,c,c}
 & & && &0 & 0 &0 \\
 & I_3 & &&  &0 &0 &0 \\
 & & && & 0& 0& 1\\\hline
 0 &0&0&&{}& &{}&{}\\
 \vdots & \vdots & \vdots &&{}& & I_4&{}\\
0&0&0&&{}&{}&{}&{}
\end{IEEEeqnarraybox}\right)
\end{equation}
\end{step}
\begin{step}
\begin{equation}
{{\rm{\Lambda }}^{(4)}} = {\Lambda }^{(3)} \times {\Gamma ^{(4)}} = \left( {\begin{array}{*{20}{c}}
1&0&0&0&\bf{0}&0&0\\
0&1&0&0&0&\bf{0}&0\\
\bf{1}&1&1&\bf{0}&\bf{0}&\bf{0}&0\\
0&0&0&1&0&0&0
\end{array}} \right),\Gamma ^{(4)}=
\left(\begin{IEEEeqnarraybox}[
\IEEEeqnarraystrutmode
][c]{c,c,c,c/v/c,c,c}
 & & & &&0 & 0 &0 \\
 & I_4 & &  &&0 &0 &0 \\
 & & & && 0& 0& 0\\
 & & & && 1& 1& 1\\\hline
 0 &0&0&0&& &{}&{}\\
 0 & 0 & 0 &0&& & I_3&{}\\
0&0&0&0&&{}&{}&{}
\end{IEEEeqnarraybox}\right)
\end{equation}
\end{step}
The transition matrix of the operation on the noisy data is as follows
\begin{equation}
\Pi  = {\Gamma ^{(1)}} \times {\Gamma ^{(2)}} \times {\Gamma ^{(3)}} \times {\Gamma ^{(4)}} =
\left(\begin{IEEEeqnarraybox}[
\IEEEeqnarraystrutmode
][c]{c,c,c,c/v/c,c,c}
{1} & 0&0 & 1&&0 & 1 & {0} \\
{0} & 1 & 0&  1&&1 &0 & {1} \\
 0& 0& 1& 0&& 0& 0&{1}\\
 0&0 & 0& 1&& 1& 1& {1}\\
 0 &0&0&0&& 1 & 0 &{0}\\
 0 & 0 & 0 &0&& 0 & 1 &{0}\\
0&0&0&0&& 0 & 0 &{1}
\end{IEEEeqnarraybox}\right)
\end{equation}
\end{exmp}
The right-hand column of $\Pi$ is a dual codeword (i.e. $h$). Since, parity check equation of vector
$h = \left( 0\: 1\:1\:1\:0\:0\:1 \right)$ expresses that sum of second, third, fourth and seventh columns of any
matrix consisting of codewords (e.g. $G$) equals zero, but errors appeared in the first and fifth columns,
which are not included in the parity check equation related to this vector.

Note that there will certainly be three all-zero columns after column elimination operation,
due to the rank of matrix being limited to the number of rows (and not columns). Therefore,
all the columns correspondent to all-zero columns of transition matrix cannot be deemed as dual
codewords. To find out whether columns corresponding to all-zero columns of transition matrix are
dual codewords or not, more codewords are required. This will be discussed in subsection IV.

Although the error rate is very high, around $0.07$, one of the column of $H^T$ was recovered.
This leads to expectancy of recovering dual codewords in relatively large error rates.

\begin{resu}
In order to obtain any basis vector of dual code space (any column of matrix $H^T$),
only $k$ independent (valid or invalid) codewords $\upsilon_j,\,\, 1 \le j \le k$ are needed
such that they satisfy ${\upsilon _j}{h^T} = \bar 0,\,\, 1 \le j \le k$.
\end{resu}
\section{Recovering Dual Codewords from Noisy Data}
Suppose that the transmission channel is a binary symmetric channel (BSC) with crossover probability $\varepsilon  \ll \frac{1}{2}$. Suppose, also that the message is encoded using a linear block code $C(n,k)$.
Intercepted bitstream from the channel is first divided into words of length $n$, and matrix $\Lambda$ is constructed such that each word is placed in one row.
Note that the number of rows of matrix $\Lambda$ should be chosen in a way that rank of $\Lambda$  is
not constrained by the number of rows (the number of codewords required for recovery operation is
discussed in section V). Now, we introduce the iterative elimination algorithm to recover the parity check matrix.
For $\varepsilon  \ll \frac{1}{2}$, parity check matrix is achievable in low iterations.

\subsection*{Iterative Elimination Algorithm}
\begin{iter}
Consider a $k \times n$ window $W_1$. We place the window on the first $k$ rows of matrix ${{\rm{\Lambda }}^{(0,1)}}$ and perform the column elimination operation such that echelon form is achieved, i.e.
\begin{equation}
  {{\rm{\Lambda }}^{(0,1)}}={\rm{\Lambda }} =
    \left(
    \begin{array}{cccccc}\hline
\multicolumn{1}{|c}{{a_{11}^{(0,0)}}} & \cdots &{a_{1k}^{(0,0)}} & {a_{1(k + 1)}^{(0,0)}}& \cdots &\multicolumn{1}{c|}{a_{1n}^{(0,0)}}\\
\multicolumn{1}{|c}{\vdots} & \ddots & \vdots & \vdots & \ddots & \multicolumn{1}{c|}{\vdots} \\
\multicolumn{1}{|c}{a_{k1}^{(0,0)}} & \cdots &{a_{kk}^{(0,0)}}&{a_{k(k + 1)}^{(0,0)}}& \cdots &\multicolumn{1}{c|}{a_{kn}^{(0,0)}}\\\hline
{a_{(k + 1)1}^{(0,0)}}& \cdots &{a_{(k + 1)k}^{(0,0)}}&{a_{(k + 1)(k + 1)}^{(0,0)}}& \cdots &{a_{(k + 1)n}^{(0,0)}}\\
 \vdots & \ddots & \vdots & \vdots & \ddots & \vdots \\
{a_{M1}^{(0,0)}}& \cdots &{a_{Mk}^{(0,0)}}&{a_{M(k + 1)}^{(0,0)}}& \cdots &{a_{Mn}^{(0,0)}}
    \end{array}
    \right)
\end{equation}
Superscripts $(.,.)$ mean the number of column elimination step in this iteration and window number respectively. According to result $1$, formation of any all-zero columns in matrix $\Lambda ^{(k,1)}$ means that the correspondent column in the transition matrix is a dual codeword. But due to presence of noise (noise increases the rank of matrix), it is very likely not to have $n-k$ all-zero columns in the right-hand side of matrix $\Lambda ^{(k,1)}$, even in case that dual codewords appear in the transition matrix. In other words, in case of noisy channel, the probability of having all-zero syndrome matrix $s$ (the quantity $s = {\rm{\Lambda }}{H^T}$ is called the syndrome matrix and it is known at the receiver) is very low.

Knowing that $\varepsilon  \ll \frac{1}{2}$ and from result $1$, if we can find $k$ independent rows $\upsilon _j$ in the window such that ${\upsilon _j}{h^T} = \bar 0,\,\, 1 \le j \le k$ ($h^T$ is a column of $H^T$), then a low-weight  column (for which a criterion will be calculated in section V) will appear in $\Lambda ^{(k,1)}$, and the corresponding column in transition matrix $\Pi$ can be admitted (with some probability of false alarm) as a basis for dual code. Therefore, in case of noisy data, we attempt to build low-weigh columns in matrix $\Lambda ^{(k,{\kern 1pt} .)}$.
\end{iter}
\begin{iter}
$\Lambda ^{(0,2)} = \Lambda ^{(k,0)}$. Now we slide the window down by $k$ rows ($W_2$), and similar to the first iteration, by performing column elimination operation on $\Lambda ^{(0,2)}$,we reduce $W_2$ to echelon form i.e.
\begin{equation}
\Lambda ^{(0,2)} =
    \left(
    \begin{array}{cccccc}
{a_{11}^{\left( {0,1} \right)}}& \cdots &{a_{1k}^{\left( {0,1} \right)}}&{a_{1\left( {k + 1} \right)}^{\left( {0,1} \right)}}& \cdots &{a_{1n}^{\left( {0,1} \right)}}\\\hline
\multicolumn{1}{|c}{a_{21}^{(0,1)}}& \cdots &{a_{2k}^{(0,1)}}&{a_{2(k + 1)}^{(0,1)}}& \cdots & \multicolumn{1}{c|}{a_{2n}^{(0,1)}}\\
 \multicolumn{1}{|c}{\vdots} & \ddots & \vdots & \vdots & \ddots & \multicolumn{1}{c|}{\vdots} \\
\multicolumn{1}{|c}{a_{(k + 1)1}^{(0,1)}}& \cdots &{a_{(k + 1)k}^{(0,1)}}&{a_{(k + 1)(k + 1)}^{(0,1)}}& \cdots & \multicolumn{1}{c|}{a_{(k + 1)n}^{(0,1)}}\\\hline
{a_{(k + 2)1}^{(0,1)}}& \cdots &{a_{(k + 2)k}^{(0,1)}}&{a_{(k + 2)(k + 2)}^{(0,1)}}& \cdots &{a_{(k + 2)n}^{(0,1)}}\\
 \vdots & \ddots & \vdots & \vdots & \ddots & \vdots \\
{a_{M1}^{(0,1)}}& \cdots &{a_{Mk}^{(0,1)}}&{a_{M(k + 1)}^{(0,1)}}& \cdots &{a_{Mn}^{(0,1)}}
    \end{array}
    \right)
\end{equation}
Again we look for low-weigh columns in $\Lambda ^{(k,2)}$ and correspondent columns in $\Pi^{(2)}$.
The iterations are continued until we obtain a parity check matrix.

Notice that, at each iteration, the matrix obtained in the previous iteration is used.We will show in the next theorem that not only invalid parity equations of each iteration does not affect the next iteration, but also valid parity equations pass to the next iteration.The iterative elimination algorithm is demonstrated in table I.

As shown in \cite{Valembois2001199} and \cite{Cluzeau:isit09}, the weight of $\Lambda  \times {h^T}$ is dependent on the channel's error; so, we use the weight of the columns of $\Lambda ^{(k,i)},\;1 \le i \le \frac{M}{k}$ to detect the words of $H$. If a column in $\Pi^{(i)}$ belongs to the dual code words, the weight of its correspondent column in $\Lambda ^{(k,i)}$ will be much smaller than $\frac{M}{2}$ since $\varepsilon  \ll \frac{1}{2}$. This is the reason of searching for low-weigh columns.
\end{iter}

\begin{theo}
In iterative elimination algorithm, error in window $W_{i-1}$ ($(i-1)$th iteration) does not pass to the next iterations and if any window consists of $k$ independent valid codewords, then basis of dual code space will appear in transition matrix $\Pi^{(i)}$ .
\end{theo}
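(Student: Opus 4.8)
The plan is to reduce the second assertion to Theorem~1 and to settle the first assertion by exploiting the fact that the whole algorithm is a chain of invertible column operations, so that whether a given column vanishes on a given row is decided purely by an inner product with the corresponding column of the accumulated transition matrix. First I would fix notation: write each row of $\Lambda$ as $\rho = c + e$ with $c \in C$ a genuine codeword and $e$ the (possibly zero) error vector introduced by the BSC, and record that after the accumulated operations $\Pi^{(i)} = \Pi^{(i-1)}\,\Gamma^{(1)}\cdots\Gamma^{(k)}$ the working matrix is $\Lambda\,\Pi^{(i)}$, whose entry in row $\rho$ and column $m$ equals $\rho\,\pi_m$, where $\pi_m$ is the $m$th column of $\Pi^{(i)}$. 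Hence a column of the working matrix is all-zero on a prescribed set of rows if and only if each such row $\rho$ satisfies $\rho\,\pi_m = \bar 0$; identifying $\pi_m$ with a candidate $h^T$, this is just the orthogonality relation $\rho\,h^T = \bar 0$. Because $\Gamma^{(1)},\dots,\Gamma^{(k)}$ and the transition matrices of every earlier window are invertible, this criterion is coordinate-free and the rank and orthogonality structure carried between iterations is exactly that of the underlying code.

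For the second assertion I would invoke Theorem~1 directly. If window $W_i$ consists of $k$ independent valid codewords, these rows have the form $\rho_j = c_j$ with $c_j \in C$, and, being $k$ independent vectors of the $k$-dimensional code, they span $C$; they therefore play precisely the role of $G$ in Theorem~1. Column elimination on the window yields the echelon form $\left( B_{k\times k}\left|\, \bar 0\right.\right)$ with $n-k$ all-zero columns, and by the argument of Theorem~1 the $n-k$ columns of $\Pi^{(i)}$ correspondent to those zero columns are linearly independent and satisfy $c_j\,\pi = \bar 0$ for every $j$. Since the $c_j$ span $C$, each such $\pi$ is orthogonal to all of $C$, so these $n-k$ independent vectors form a basis of the $(n-k)$-dimensional dual code, which is exactly the claimed appearance of the dual basis in $\Pi^{(i)}$.

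For the first assertion I would argue from the orthogonality criterion. A genuine dual word $h$ satisfies $c\,h^T = \bar 0$ for every $c \in C$, so for any row $\rho = c + e$ one has $\rho\,h^T = e\,h^T$. Consequently $h$ produces a zero column on a window precisely when that window's error vectors are orthogonal to $h$ — in particular whenever the window is error-free — and this condition involves only the rows of the current window, never those of $W_{i-1}$. An invalid parity equation detected in $W_{i-1}$ is, by contrast, a transition column $g$ that vanished on $W_{i-1}$ only through cancellation of its error terms; since the operations applied in iteration $i$ are invertible and depend solely on the inner products of $W_i$'s rows with the candidate columns, such a $g$ neither blocks nor manufactures the vanishing of any column on $W_i$. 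Thus errors confined to $W_{i-1}$ cannot obstruct detection of the true dual words in $W_i$, while a genuine dual word, being orthogonal to every valid codeword, remains a zero column across all error-free windows and so is passed forward.

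The main obstacle I anticipate is making the carry-over between iterations rigorous: because the windows share the single accumulated matrix $\Pi^{(i)} = \Pi^{(i-1)}\,\Gamma^{(1)}\cdots\Gamma^{(k)}$, the change of coordinates produced by an erroneous window $W_{i-1}$ could a priori interfere with the processing of $W_i$. The resolution, and the crux of the write-up, is that the vanishing of a column on any set of rows is governed purely by orthogonality to the true code, a relation invariant under the invertible column operations; therefore the corrupted coordinate frame inherited from $W_{i-1}$ alters only the representation of the candidate dual words, not which genuine dual words are recoverable from $W_i$.
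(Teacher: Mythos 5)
Your proposal is correct, but it proves the theorem by a genuinely different route than the paper. The paper's proof is an explicit GF(2) bookkeeping induction: it models each window's parity columns as $\sum_{m}\bigl(\alpha_{(j,m)}+\beta^{i}_{(j,m)}\bigr)b_m^i$, where the $\alpha$'s are the true parity coefficients and the $\beta^i$'s encode the errors of window $W_i$, and then tracks the elimination through successive iterations to show a telescoping cancellation: the cumulative transition matrix after iteration $i$ has right-hand columns with entries $\alpha_{(j,m)}+\beta^{i}_{(j,m)}$, so errors of $W_{i-1}$ literally cancel out of $\Pi^{(i)}$, and setting $\beta^i=0$ (clean window) leaves exactly the $\alpha$'s, i.e.\ a dual basis. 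You instead argue coordinate-freely: since the working matrix is $\Lambda\,\Pi^{(i)}$ with $\Pi^{(i)}$ invertible, a column vanishes on $W_i$ if and only if the corresponding column of $\Pi^{(i)}$ is orthogonal to $W_i$'s \emph{original} rows; when those rows are $k$ independent valid codewords they span $C$, so the $n-k$ columns attached to the zero columns are independent vectors orthogonal to all of $C$, hence a dual basis, and prior windows' errors can only change the coordinate frame $\Pi^{(i-1)}$, never this orthogonality relation. Your approach buys rigor and generality: it handles arbitrary error patterns via $\rho=c+e$ without the paper's implicit assumption that errors are expressible as perturbations $\beta$ of the combination coefficients, and it makes the noninterference claim a consequence of invertibility alone (indeed it shows the detected columns always span the orthogonal complement of the current window's row span, whatever happened before). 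What the paper's computation buys, and yours does not deliver, is the explicit form of $\Pi^{(i)}$ — that the recovered candidate words are exactly $\alpha+\beta^i$, contaminated only by the \emph{current} window's errors — which is what the authors lean on later when justifying the low-weight-column detection and the threshold analysis. One small inaccuracy to fix in a write-up: the independence of the $n-k$ columns of $\Pi^{(i)}$ follows from invertibility of $\Pi^{(i)}$ (a product of invertible elementary matrices), not literally "by the argument of Theorem 1," since the identity block $I_{n-k}$ used there need not survive in the accumulated transition matrix.
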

\begin{proof}
Matrix $\Lambda$, in the transmitter, can be shown as follows
\begin{align}
{{\rm{\Lambda }}^{(correct)}} = \left ( {\left. B \right|{P^{(correct)}}} \right)
\end{align}
where $B$ is a basis set for the vector space that columns of matrix ${{\rm{\Lambda }}^{(correct)}}$ are its elements, and $P^{(correct)}$ is the sub-matrix of dependent columns. We show the columns of these two matrices as $b_j,\:1\leq j\leq k$ and $p_j^{(correct)},\:1\leq j\leq k$ respectively. We have the following relation between $b_j$s and $p_j^{(correct)}$s
\begin{align}
p_j^{(correct)} = \mathop \sum \limits_{m = 1}^k {\alpha _{\left( {j,m} \right)}}{b_m}
\label{eqn: Pcorrect}
\end{align}
where ${\alpha _{\left( {j,m} \right)}} \in \left\{ {0,1} \right\}$. ${\alpha _{\left( {j,m} \right)}} = 1$ If $P^{(correct)}$ is dependent on $b_m$, else, ${\alpha _{\left( {j,m} \right)}}= 0$.
Clearly, permutation of rows does not change ${\alpha _{\left( {j,m} \right)}}$s. Therefore, columns of the
transformed matrix satisfy equation \eqref{eqn: Pcorrect}. Note that, in this case, elements of $b_j$s and $P^{(correct)}$s permute.
\begin{exmp}
Assume that matrix ${{\rm{\Lambda }}^{(correct)}}$ (with rank $2$) is as follows
\begin{equation}
{{\rm{\Lambda }}^{(correct)}} =
\left(\begin{IEEEeqnarraybox}[
\IEEEeqnarraystrutmode
][c]{c,c/v/c,c,c}
1&0&&1&1&0\\
0&1&&1&0&1\\
1&1&&0&1&1
\end{IEEEeqnarraybox}\right)
\end{equation}
Apparently, $P_j^{(correct)}$ is a combination of $b_1$ and $b_2$
\begin{equation}
p_1^{(correct)} =
\left(\begin{IEEEeqnarraybox}[
\IEEEeqnarraystrutmode
][c]{c}
1\\
1\\
0
\end{IEEEeqnarraybox}\right)=
\left(\begin{IEEEeqnarraybox}[
\IEEEeqnarraystrutmode
][c]{c}
1\\
0\\
1
\end{IEEEeqnarraybox}\right)+
\left(\begin{IEEEeqnarraybox}[
\IEEEeqnarraystrutmode
][c]{c}
0\\
1\\
1
\end{IEEEeqnarraybox}\right)
\end{equation}
Consequently,
\begin{align}
p_1^{(correct)} = \mathop \sum \limits_{m = 1}^2 {\alpha _{\left( {1,m} \right)}}{b_m},\quad  \Rightarrow \quad \left\{ {\begin{array}{*{20}{c}}
{{\alpha _{(1,1)}} = 1}\\
{\alpha _{(1,2)} = 1}
\end{array}} \right.
\end{align}
\begin{align}
p_2^{(correct)} = \mathop \sum \limits_{m = 1}^2 {\alpha _{\left( {2,m} \right)}}{b_m},\quad  \Rightarrow \quad \left\{ {\begin{array}{*{20}{c}}
{{\alpha _{(2,1)}} = 1}\\
{\alpha _{(2,2)} = 0}
\end{array}} \right.
\end{align}
\begin{align}
p_3^{(correct)} = \mathop \sum \limits_{m = 1}^2 {\alpha _{\left( {3,m} \right)}}{b_m},\quad  \Rightarrow \quad \left\{ {\begin{array}{*{20}{c}}
{{\alpha _{(3,1)}} = 1}\\
{\alpha _{(3,2)} = 1}
\end{array}} \right.
\end{align}
\end{exmp}
Now, consider window $W_1$ in noisy data matrix $\Lambda ^{(0,1)}$ (iteration 1). Before performing the operations, for each column in windows $W_1$, $W_2$ and $W_i$ we can write
\begin{align}
p_j^{(0,1,1)} = \mathop \sum \limits_{m = 1}^k \left( {{\alpha _{\left( {j,m} \right)}} + \beta _{\left( {j,m} \right)}^1} \right)b_m^1,\quad 1 \le j \le n - k
\end{align}
\begin{align}
p_j^{(0,1,2)} = \mathop \sum \limits_{m = 1}^k \left( {{\alpha _{\left( {j,m} \right)}} + \beta _{\left( {j,m} \right)}^2} \right)b_m^2,\quad 1 \le j \le n - k
\end{align}
\begin{align}
p_j^{(0,1,i)} = \mathop \sum \limits_{m = 1}^k \left( {{\alpha _{\left( {j,m} \right)}} + \beta _{\left( {j,m} \right)}^i} \right)b_m^i,\quad 1 \le j \le n - k
\end{align}
where term $\mathop \sum \limits_{m = 1}^k \beta _{\left( {j,m} \right)}^ib_m^i,\: \beta _{\left( {j,m} \right)}^i \in \left\{ {0,1} \right\}$ is added to eq. \eqref{eqn: Pcorrect} due to channel error. In the proof, superscripts $(.,.,.)$ denote the number of column elimination step, iteration and window respectively. So $(0,1,i)$ means the $i$th window ($i$th $k$ rows), before performing column elimination of $1$th iteration, and $(k,1,i)$ means the $i$th window, after performing column elimination of $1$th iteration. Remember that in iterative elimination algorithm, operations are performed on the whole rows of matrix, regardless of being in the window or not.

After performing iteration 1 of iterative elimination algorithm, consider window $W_1$, and implicitly windows $W_2$ and $W_i$. The following relations can be written for columns in each window
\begin{align}
p_j^{(k,1,1)} = \mathop \sum \limits_{m = 1}^k \left( {{\alpha _{\left( {j,m} \right)}} + \beta _{\left( {j,m} \right)}^1} \right)b_m^1 + \mathop \sum \limits_{m = 1}^k \left( {{\alpha _{\left( {j,m} \right)}} + \beta _{\left( {j,m} \right)}^1} \right)b_m^1 = \bar 0,\quad 1 \le j \le n - k
\end{align}
\begin{align}
p_j^{(k,1,2)} = \mathop \sum \limits_{m = 1}^k \left( {{\alpha _{\left( {j,m} \right)}} + \beta _{\left( {j,m} \right)}^2} \right)b_m^2 + \mathop \sum \limits_{m = 1}^k \left( {{\alpha _{\left( {j,m} \right)}} + \beta _{\left( {j,m} \right)}^1} \right)b_m^2 = \mathop \sum \limits_{m = 1}^k \left( {\beta _{\left( {j,m} \right)}^1 + \beta _{\left( {j,m} \right)}^2} \right)b_m^2,
\end{align}
\begin{align}
p_j^{(k,1,i)} = \mathop \sum \limits_{m = 1}^k \left( {{\alpha _{\left( {j,m} \right)}} + \beta _{\left( {j,m} \right)}^i} \right)b_m^i + \mathop \sum \limits_{m = 1}^k \left( {{\alpha _{\left( {j,m} \right)}} + \beta _{\left( {j,m} \right)}^1} \right)b_m^i = \mathop \sum \limits_{m = 1}^k \left( {\beta _{\left( {j,m} \right)}^1 + \beta _{\left( {j,m} \right)}^i} \right)b_m^i,
\end{align}
we get transition matrix ${\Pi ^{(1)}}$ as follows
\begin{equation}
{\Pi ^{(1)}} =
\left(\begin{IEEEeqnarraybox}[
\IEEEeqnarraystrutmode
][c]{c,c,c/v/c,c,c}
{}&{}&{}&&{{\alpha _{\left( {1,1} \right)}} + \beta _{\left( {1,1} \right)}^1}& \cdots &{{\alpha _{\left( {n - k,1} \right)}} + \beta _{\left( {n - k,1} \right)}^1}\\
{}&{{I_n}}&{}&& \vdots & \ddots & \vdots \\
{}&{}&{}&&{{\alpha _{\left( {1,k} \right)}} + \beta _{\left( {1,k} \right)}^1}& \cdots &{{\alpha _{\left( {n - k,k} \right)}} + \beta _{\left( {n - k,k} \right)}^1}\\\hline
0& \cdots &0&&{}&{}&{}\\
 \vdots & \ddots & \vdots &&{}&{{I_{n - k}}}&{}\\
0& \cdots &0&&{}&{}&{}
\end{IEEEeqnarraybox}\right)
\end{equation}
Similarly, in iteration 2, we can write for windows $W_2$ and implicitly $W_i$ before operation
\begin{align}
p_j^{(0,2,2)} = \mathop \sum \limits_{m = 1}^k \left( {\beta _{\left( {j,m} \right)}^1 + \beta _{\left( {j,m} \right)}^2} \right)b_m^2,\quad 1 \le j \le n - k
\end{align}
\begin{align}
p_j^{(0,2,i)} = \mathop \sum \limits_{m = 1}^k \left( {\beta _{\left( {j,m} \right)}^1 + \beta _{\left( {j,m} \right)}^i} \right)b_m^i,\quad 1 \le j \le n - k
\end{align}
and after operation
\begin{align}
p_j^{(k,2,2)} = \mathop \sum \limits_{m = 1}^k \left( {\beta _{\left( {j,m} \right)}^1 + \beta _{\left( {j,m} \right)}^2} \right)b_m^2 + \mathop \sum \limits_{m = 1}^k \left( {\beta _{\left( {j,m} \right)}^1 + \beta _{\left( {j,m} \right)}^2} \right)b_m^2 = \bar 0,\quad 1 \le j \le n - k
\end{align}
\begin{align}
p_j^{(k,1,i)} = \mathop \sum \limits_{m = 1}^k \left( {\beta _{\left( {j,m} \right)}^1 + \beta _{\left( {j,m} \right)}^i} \right)b_m^i + \mathop \sum \limits_{m = 1}^k \left( {\beta _{\left( {j,m} \right)}^1 + \beta _{\left( {j,m} \right)}^2} \right)b_m^i = \mathop \sum \limits_{m = 1}^k \left( {\beta _{\left( {j,m} \right)}^2 + \beta _{\left( {j,m} \right)}^i} \right)b_m^i,
\end{align}
the resultant transition matrix of iteration 2 is
\begin{equation}
{\Pi ^{(2)}} =
\left(\begin{IEEEeqnarraybox}[
\IEEEeqnarraystrutmode
][c]{c,c,c/v/c,c,c}
{}&{}&{}&&{{\alpha _{\left( {1,1} \right)}} + \beta _{\left( {1,1} \right)}^2}& \cdots &{{\alpha _{\left( {n - k,1} \right)}} + \beta _{\left( {n - k,1} \right)}^2}\\
{}&{{I_n}}&{}&& \vdots & \ddots & \vdots \\
{}&{}&{}&&{{\alpha _{\left( {1,k} \right)}} + \beta _{\left( {1,k} \right)}^2}& \cdots &{{\alpha _{\left( {n - k,k} \right)}} + \beta _{\left( {n - k,k} \right)}^2}\\\hline
0& \cdots &0&&{}&{}&{}\\
 \vdots & \ddots & \vdots &&{}&{{I_{n - k}}}&{}\\
0& \cdots &0&&{}&{}&{}
\end{IEEEeqnarraybox}\right)
\end{equation}
thus, error related to $W_1$ does not appear in the transition matrix $\Pi ^{(2)}$.Consider iteration $i$ and window $W_i$ before operations
\begin{align}
p_j^{(0,i,i)} = \mathop \sum \limits_{m = 1}^k \left( {\beta _{\left( {j,m} \right)}^{i - 1} + \beta _{\left( {j,m} \right)}^i} \right)b_m^i,\quad 1 \le j \le n - k
\end{align}
after operations of iteration $i$, we have the following
\begin{align}
p_j^{(k,i,i)} = \mathop \sum \limits_{m = 1}^k \left( {\beta _{\left( {j,m} \right)}^{i - 1} + \beta _{\left( {j,m} \right)}^i} \right)b_m^i + \mathop \sum \limits_{m = 1}^k \left( {\beta _{\left( {j,m} \right)}^{i - 1} + \beta _{\left( {j,m} \right)}^i} \right)b_m^i = \bar 0,\quad 1 \le j \le n - k
\end{align}
if the $i$th window consists of $k$ independent valid codewords ${\upsilon _j},\;1 \le j \le k$ i.e. $\beta _{\left( {j,m} \right)}^i = 0,\;1 \le m \le k,\;1 \le j \le k$ then we have
\begin{equation}
{\Pi ^{(i)}} =
\left(\begin{IEEEeqnarraybox}[
\IEEEeqnarraystrutmode
][c]{c,c,c/v/c,c,c}
{}&{}&{}&&{\alpha _{\left( {1,1} \right)}} + 0& \cdots &{{\alpha _{\left( {n - k,1} \right)}} + 0}\\
{}&{{I_n}}&{}&& \vdots & \ddots & \vdots \\
{}&{}&{}&&{{\alpha _{\left( {1,k} \right)}} + 0}& \cdots &{{\alpha _{\left( {n - k,k} \right)}} + 0}\\\hline
0& \cdots &0&&{}&{}&{}\\
 \vdots & \ddots & \vdots &&{}&{{I_{n - k}}}&{}\\
0& \cdots &0&&{}&{}&{}
\end{IEEEeqnarraybox}\right)
\end{equation}
where columns in the right-hand side area basis set for dual code space.
\end{proof}
To recover any word of the parity check matrix, $h$, we need $k$ independent
codewords ${\upsilon _j},\;1 \le j \le k$ at the $i$th iteration such that they satisfy ${\upsilon _j}{h^T} = 0$, for $1 \le j \le k$.$h^T$ appears in the transition matrix at the $i$th iteration but because of the error in the data
matrix , $\mathop \sum \limits_{m = 1}^M {\rm{\Lambda }}_{mj}^{(k,i)}$  doesn't have zero weight.
\section{Computing the Threshold T}
If $h$ belongs to the dual code of $C$, ${\rm{\Lambda }} \times {h^T}$ will have a Gaussian distribution with mean $\frac{M}{2}\left( {1 - {{(1 - 2\varepsilon )}^{w{t_H}(h)}}} \right)$ and variance$\frac{M}{4}\left( {1 - {{(1 - 2\varepsilon )}^{2w{t_H}(h)}}} \right)$; Otherwise, ${{\rm{\Lambda }}_M}{h^T}$ will have a Gaussian distribution with mean $\frac{M}{2}$ and variance $\frac{M}{4}$ \cite{Cluzeau:isit09}.

If $Z_j$ is the weight of $i$th column and ${\rm{\Lambda }}_{mj}^{(k,i)}$ is the $(m,j)$ element of ${{\rm{\Lambda }}^{(k,i)}}$, we will have
\begin{align}
{Z_j} = \mathop \sum \limits_{m = 1}^M {\rm{\Lambda }}_{mj}^{(k,i)},
\end{align}
According to the parity elimination algorithm there are two alternatives:

\:${{{\cal H}_0}}$: If in the algorithm procedure the dual codeword appears in the $j$th column of the transition matrix, $Z_j$ will have a Gaussian distribution with mean and variance
\begin{align}
{m_{{Z_j}}} = \frac{M}{2}\left( {1 - {{(1 - 2\varepsilon )}^{w{t_H}(h)}}} \right),\\
\sigma _{{Z_j}}^2 = \frac{M}{4}\left( {1 - {{(1 - 2\varepsilon )}^{2w{t_H}(h)}}} \right),
\end{align}
respectively.

\:${{{\cal H}_1}}$: Otherwise, $Z_j$ will have a Gaussian distribution with mean and variance
\begin{align}
{m_{{Z_j}}} = \frac{M}{2},\\
\sigma _{{Z_j}}^2 = \frac{M}{4},
\end{align}
False alarm probability $(p_{fa})$: is the probability of $Z_j$ being smaller than $T$, but its correspondent column
in the transition matrix does not appear in the parity check matrix.
\begin{align}
{p_{fa}} &= P\left( {{Z_j} < T\left| {{{\cal H}_1}} \right.} \right)\\
{p_{fa}} &= \int_{ - \infty }^T {\frac{1}{{\sqrt {2\pi \left( {{M \mathord{\left/
 {\vphantom {M 4}} \right.
 \kern-\nulldelimiterspace} 4}} \right)} }}{e^{ - \frac{{{{\left( {x - \left( {{M \mathord{\left/
 {\vphantom {M 2}} \right.
 \kern-\nulldelimiterspace} 2}} \right)} \right)}^2}}}{{2\left( {{M \mathord{\left/
 {\vphantom {M 4}} \right.
 \kern-\nulldelimiterspace} 4}} \right)}}}}dx}\nonumber\\
 &= \phi \left( {\frac{{T - {M \mathord{\left/
 {\vphantom {M 2}} \right.
 \kern-\nulldelimiterspace} 2}}}{{\sqrt {{M \mathord{\left/
 {\vphantom {M 4}} \right.
 \kern-\nulldelimiterspace} 4}} }}} \right)
\label{eqn: p_fa}
\end{align}
Non-detection probability $(p_{nd})$: is the probability of $Z_j$ being greater than $T$, but its correspondent
column in the transition matrix appears in the Parity check matrix.
\begin{align}
{p_{nd}} &= P\left( {{Z_j} > T\left| {{{\cal H}_0}} \right.} \right),\\
{p_{nd}} &= \int_T^{ + \infty } {\frac{1}{{\sqrt {2\pi \left( {M\alpha \left( {1 - \alpha } \right)} \right)} }}{e^{ - \frac{{{{\left( {x - \left( {M\alpha } \right)} \right)}^2}}}{{2\left( {M\alpha \left( {1 - \alpha } \right)} \right)}}}}dx}\nonumber\\
&= 1 - \phi \left( {\frac{{T - M\alpha }}{{\sqrt {M\alpha \left( {1 - \alpha } \right)} }}} \right),
\label{eqn: p_nd}
\end{align}
Where $\alpha  = \frac{{\left( {1 - {{(1 - 2\varepsilon )}^{w{t_H}(h)}}} \right)}}{2}$.From \eqref{eqn: p_fa} and \eqref{eqn: p_nd} we have
\begin{align}
M = {\left( {\frac{{{\phi ^{ - 1}}\left( {1 - {p_{nd}}} \right)\sqrt {1 - {{\left( {1 - 2\varepsilon } \right)}^{2w{t_H}(h)}}}  - {\phi ^{ - 1}}\left( {{p_{fa}}} \right)}}{{1 - {{\left( {1 - 2\varepsilon } \right)}^{2w{t_H}(h)}}}}} \right)^2},
\label{eqn: M}
\end{align}
\begin{align}
T = \frac{1}{2}\left( {M + {\phi ^{ - 1}}\left( {{p_{fa}}} \right)\sqrt M } \right),
\label{eqn: T}
\end{align}
Block codes and especially LDPC codes have words with very low weight in their parity check matrix. Thus,
$M$ does not significantly change even with increase of $\varepsilon $ \cite{chabot:isit07}. Therefore, we choose
the greatest $w{t_H}\left( h \right)$.
\section{Practical Experiments}
In the previous section we have shown how to find words of parity check matrix.
Table II gives some results of performing parity elimination algorithm on random linear codes of rate
$\frac{1}{2}$.

In comparison to \cite{Cluzeau:isit09}, results obtained by iterative elimination algorithm are better. Decrease in the number of recovered dual codewords with increase in code length is due to the increase in error probability in each codeword.

Complexity of ${\cal O}\left( {{n^3}} \right)$ and memory required of order ${\cal O}\left( {{n^2}} \right)$ makes this algorithm practical.
Furthermore, the structure of algorithm enables us to perform it on long length codes, by exploitation of parallel processors.
On the other hand, there is no need to a priori knowledge about the Hamming weigh of parity check matrix.
\section{Conclusion}
In this paper we introduced a method with very low complexity to recover the parity check matrix of a binary linear block code. Due to using only XOR operator, the algorithm can be easily implemented. A very important contribution in this algorithm is that (unlike \cite{Cluzeau:isit09}) there is no need to search on Hamming weigh of dual codewords, i.e. Parity elimination algorithm can recover dual codewords of different weights simultaneously.

As mentioned, increase in code length might result in more average number of bit errors in each codeword and therefore more iterations are required in the algorithm. But, there might still be enough valid codewords in the noisy data matrix, though not placed in the same window. Instead of searching for dual codewords, one can recover valid codewords. In the next paper, we will introduce a method to detect errors in the noisy data matrix.


\newpage
\begin{table}[hb]
\label{table: elimination algorithm} \caption{Parity Elimination Algorithm}
\begin{center}
\begin{tabular}{|l|}
\hline
Choose $M > eq.(59)$ \\
Compute $T$ with $eq.(60)$ \\
Choose number of Iteration $I$  \\
$for\;  i=1: I$\\
\;\;\;\;\;\; $\Lambda ^{(k,i + 1)} = \Lambda^{(0,i)}\, \Pi ^{(i + 1)}$\\
\;\;\;\;\;\; $for\;  j=k+1: n$\\
\;\;\;\;\;\;\;\;\;\;\;\; ${Z_j} = \mathop \sum \limits_{m = 1}^M {\rm{\Lambda }}_{mj}^{(k,i)}$\\
\;\;\;\;\;\;\;\;\;\;\;\;  $if \; {Z_j} < T$\\
\;\;\;\;\;\;\;\;\;\;\;\; Store $j$th column of ${\Pi ^{(i + 1)}}$\\
\;\;\;\;\;\; $end$ \\
$end$ \\
\hline
\end{tabular}
\end{center}
\end{table}

\begin{table}[hb]
\label{tab:recoword} \caption{Average number of parity check words recovered in $10000$ run of the parity elimination algorithm on random code of rate $\frac{1}{2}$. $n$  is the length of codewords and $\varepsilon$ is crossover probability and $M=10k$, $I=10$.}
\begin{center}
\begin{tabular}{|c|c|c|c|c|c|c|}
  \hline
  $\varepsilon$ & 0.001	& 0.002	& 0.005 & 0.01	& 0.02 & 0.05\\
  $n$ & & & & & & \\\hline
  32 & 15.9 & 10.5 & 8.7 & 5.6 & 3.2 & 1.4\\\hline
  64 & 31.7 & 26.9 & 18.6 & 8.6 & 1.7 & 0.4\\\hline
  128 & 63.4 & 51.1 & 27.3 & 1.2 & 0 & 0 \\\hline
  256 & 127.2 & 83.8 & 0 & 0 & 0 & 0 \\\hline
\end{tabular}
\end{center}
\end{table}

\newpage

\bibliographystyle{vancouver}
\bibliography{IETreference}

\begin{thebibliography}{1}

\bibitem{Valembois2001199}
Valembois A.
\newblock Detection and recognition of a binary linear code.
\newblock Discrete Applied Mathematics. 2001;111(1-2):199 -- 218.

\bibitem{Cluzeau:isit06}
Cluzeau M.
\newblock Block code reconstruction using iterative decoding techniques.
\newblock In: Information Theory, 2006 IEEE International Symposium on; 2006.
  p. 2269 --2273.

\bibitem{Cluzeau:isit08}
Cluzeau M, Tillich JP.
\newblock On the code reverse engineering problem.
\newblock In: Information Theory, 2008. ISIT 2008. IEEE International Symposium
  on; 2008. p. 634 --638.

\bibitem{Barbier:amcs06}
Barbier G J~Sicot, Houcke S.
\newblock Algebric approach of the reconstruction of linear and convolutional
  error correcting codes.
\newblock In Applied Mathematics and Computer Sciences. 2006;2:113--118.

\bibitem{Barbier:phd07}
Barbier J.
\newblock Analyse de canaux de communication dans un contexte non-coopératif.
\newblock Paristech; 2007.

\bibitem{Cluzeau:isit09}
Cluzeau M, Finiasz M.
\newblock Recovering a code's length and synchronization from a noisy
  intercepted bitstream.
\newblock In: Information Theory, 2009. ISIT 2009. IEEE International Symposium
  on; 2009. p. 2737 --2741.

\bibitem{canteant:transinfo98}
Canteaut A, Chabaud F.
\newblock A new algorithm for finding minimum-weight words in a linear code:
  application to McEliece's cryptosystem and to narrow-sense BCH codes of
  length 511.
\newblock Information Theory, IEEE Transactions on. 1998 jan;44(1):367 --378.

\bibitem{Morelos:02}
Morelos-Zaragoza RH.
\newblock The Art of Error Correcting Coding.
\newblock John Wiley \& Sons; 2002.

\bibitem{chabot:isit07}
Chabot C.
\newblock Recognition of a code in a noisy environment.
\newblock In: Information Theory, 2007. ISIT 2007. IEEE International Symposium
  on; 2007. p. 2211 --2215.

\end{thebibliography}

\end{spacing}

\end{document}